\pgfplotsset{compat = newest,
		plot coordinates/math parser = false}
\newlength\figureheight
\newlength\figurewidth
\newtheorem{theorem}{Theorem}[section]
\newtheorem{lemma}[theorem]{Lemma}
\newtheorem{definition}[theorem]{Definition}
\def\ones{{\bf 1}}
\def\diag{{\bf diag}}
\DeclareMathOperator*{\argmin}{argmin}
\DeclareMathOperator*{\argmax}{argmax}
\DeclareMathOperator{\tr}{trace}
\providecommand{\keywords}[1]
{
  \small	
  \textbf{\textit{Keywords---}} #1
}
\title{Community Detection by a Riemannian Projected Proximal Gradient Method}
\author{Meng Wei\thanks{Department of Mathematics, Florida State University, 208 Love Building, 1017 Academic Way, Tallahassee, FL 32306-4510, USA.
            \href{mailto:mwei@math.fsu.edu}{\texttt{mwei@math.fsu.edu}}.}
\and Wen Huang\thanks{Corresponding author. School of Mathematical Sciences, Fujian Provincial Key Laboratory of Mathematical Modeling and High-Performance Scientific Computing,
        Xiamen University, Xiamen, Fujian, P.R.China, 361005. 
        \href{mailto:wen.huang@xmu.edu.cn}{\texttt{wen.huang@xmu.edu.cn}}.}
\and Kyle A. Gallivan\thanks{Department of Mathematics, Florida State University, 208 Love Building, 1017 Academic Way, Tallahassee, FL 32306-4510, USA.
	\href{mailto:gallivan@math.fsu.edu}{\texttt{gallivan@math.fsu.edu}}.}
\and Paul Van Dooren\thanks{Department of Mathematical Engineering, Universit\'e catholique de Louvain, Louvain-La-Neuve, Belgium.
	\href{mailto:paul.vandooren@uclouvain.be}{\texttt{paul.vandooren@uclouvain.be}}.}
}
\begin{document}

\maketitle

\begin{abstract}
Community detection plays an important role in understanding and exploiting the structure of complex systems. Many algorithms have been developed  for community detection using modularity maximization or other techniques. In this paper, we formulate the community detection problem as a constrained nonsmooth optimization problem on the compact Stiefel manifold. A Riemannian projected proximal gradient method is proposed and used to solve the problem. To the best of our knowledge, this is the first attempt to use Riemannian optimization for community detection problem. Numerical experimental results on synthetic benchmarks and real-world networks show that our algorithm is effective and outperforms several state-of-art algorithms.
\end{abstract}

\keywords{Community Detection, Modularity Matrix, Riemannian Optimization, Projected Proximal Gradient}



\section{Introduction}
Describing and analyzing complex systems in mathematical models is a challenging problem. 
Networks are a natural representation for many kinds of complex systems, where networks are sets of nodes or vertices joined together in pairs by links or edges. There are several types of networks.  For example, Facebook is a large social network, where more than one billion people are connected via virtual acquaintanceship. Another common example is the internet, the physical network of computers, routers, and modems which are linked via cables or wireless signals. Many other examples come from biology, physics, engineering, computer science, ecology, economics, marketing, etc. 

Real-world networked systems often have a community structure, which is the division of network nodes into groups such that the network connections are denser within the groups and are sparser between the groups, see \cite{newman2004finding}. These groups are called communities, or modules. 

Detecting community structure in a network is a powerful tool for understanding and exploiting the structure of networks, and it has various practical applications \cite{girvan2002community}. Communities in a social network might represent real social groupings, perhaps by acquaintanceship, interest or background; communities in a metabolic network might represent cycles and other functional groupings; communities on the web might represent pages on related topics. 

A variety of community detection algorithms have been developed in recent years, such as the GN algorithm \cite{newman2004fast},  the spectral modularity maximization algorithm \cite{newman2006modularity}, the Louvain method \cite{blondel2008fast}, the Infomap algorithm \cite{rosvall2008maps}, statistical inference \cite{newman2007mixture}, deep learning \cite{yang2016modularity}. Modularity optimization approaches have been shown to be highly effective in practical applications. \cite{fortunato2010community} covers in practical and theoretical detail modularity-based approaches to  community detection.

Recently, optimization over Riemannian manifolds has drawn much attention because of its application in many different fields. Almost all of the manifold optimization methods require computing the derivatives of the objective function and do not apply to the case where the objective function is nonsmooth. In \cite{chen2020proximal}, the authors proposed a Riemannian proximal gradient method called ManPG for a class of nonsmooth nonconvex optimization problems over a Stiefel manifold
\begin{gather}
\min F(X):=f(X)+g(X), \\
s.t. ~X\in \mathcal{M}:=St(q,n)=\{X:X\in \mathbb{R}^{n\times q}, X^TX=I_q\},\notag
\label{nonsmooth}
\end{gather}
where $I_q$ denotes the $q\times q$ identity matrix ($q<n$), $f$ is smooth, possibly nonconvex, and its gradient $\nabla f$ is Lipschitz continuous, $g$ is convex, possibly nonsmooth, and is Lipschitz continuous and the proximal mapping of $g$ is easy to find.

In \cite{huang2019extending}, the authors extended the fast iterative shrinkage-thresholding (FISTA) algorithm to solve (\ref{nonsmooth}), and the accelerated Riemannian manifold proximal gradient algorithm performed better than ManPG. In \cite{huang2019riemannian}, they developed and analyzed a generalization of the proximal gradient methods with and without acceleration for nonsmooth Riemannian optimization problems.

In this paper, we propose the accelerated Riemannian manifold projected proximal gradient (ARPPG) method for community detection, and we solve the community detection problem using a constrained nonsmooth optimization problem over a Stiefel manifold.

The paper is organized as follows. In Section 2, we define assignment matrices and show that the ideal graph assignment is a global maximal solution of the modularity function. In Section 3, we show the connection between the modularity matrix and the Stiefel manifold, and then transform the community detection problem to the constrained Stiefel optimization problem. Because the constraint defines a feasible set that is a subset of the Stiefel manifold, we must apply a projection to the proximal result.  This leads to the accelerated Riemannian manifold projected proximal gradient (ARPPG) algorithm.  Extensive numerical experiments on synthetic and real world networks are described in Section 4. Finally, conclusions and future work are stated in Section 5.

\section{Derivation of Global Maximum over Assignment Matrices}

\subsection{Assignment matrices}
We will denote a $q$ dimensional vector with all entries being $1$ by $\ones_q$ and denote the $q\times q$ permutation matrices by $P_q$.

A matrix in the set of assignment matrices, ${\cal A}_{n,q}$, is defined as
\begin{definition}
The matrix $X \in \lbrace 0, 1 \rbrace^{n \times q}$, with $n \geq q$, is an assignment matrix if it satisfies
\begin{enumerate}[(i)]
\item $X \ones_q = \ones_n$,
\item $X^TX = \diag(n_1, \cdots , n_q)$
where $n_i=||Xe_i||_1$.
\end{enumerate}

$X$ is said to be in canonical ordering if the rows
are permuted so that 
\[
X = 
\begin{pmatrix} 
\ones_{n_1} & & & \\
& \ones_{n_2} & & \\
& & \ddots & \\
& & & \ones_{n_q} 
\end{pmatrix}.
\]
\end{definition}
Of course, the column ordering is not unique for the canonical form, i.e.,
$XP_q$ is the same community assignment but with a different
correspondence between the sets and the columns of the assignment matrix.
For essential uniqueness, the additional constraint of
$n_1 \geq n_2 \geq \dotso \geq n_q$ can be imposed.
The columns are orthogonal, but not orthonormal, and $X$ has exactly
$n$ nonzero elements all of which have the value of $1$. As a result,
$X$ defines a partitioning of the indices $1, \dotsc , n$ into $q$
disjoint sets.

\subsection{The Modularity Cost Function}
From \cite{newman2006finding}, the scalar cost function $f(X)$ called modularity (up to a scalar $\frac{1}{2m}$) can be written as a quadratic function over
$n \times q$ matrices defined by the matrix
\[
M = A - \frac{A \ones_n \ones_n^T A}{2m},\;\;f(X)=\tr(X^TMX),
\]
where $A$ is the adjacency matrix of the graph, $M$ is the modularity matrix, $m$ is the number of edges, $n$ is the number of vertices in the graph and the total degree of the graph is $2m=\ones_n^T A  \ones_n$.

The value of $f(X)$ is invariant under permutations on the columns of 
the assignment matrix $X$, i.e., $f(XP_q)=\tr(P_q^TX^TMXP_q)$. So there are
multiple optimal ways of specifying the same community assignment.

\subsection{Maximal of the Modularity Function on Ideal Graphs}
We consider in this section so-called ideal graphs. An ideal graph is a graph where the communities are cliques and there are no edges between the cliques.

When $A$ is an ideal graph with $q$ communities we know it can be written \cite{marchand2017low}
\[
A = \tilde{Z}_* \tilde{Z}_*^T
\]
where  $\tilde{Z}_* \in {\cal A}_{n,q}$ is not necessarily in canonical
form and there exists a row permutation $P$ so that
\[
PAP^T = A_P=Z_* Z_*^T
\]
where $A_P$ is block diagonal with diagonal blocks $\ones_{n_i}\ones_{n_i}^T
= z_iz_i^T$ for $1 \leq i \leq q$ and
\[
Z_* = 
\begin{pmatrix} 
\ones_{n_1} & & & \\
&\ones_{n_2} & & \\
& & \ddots & \\
& & &\ones_{n_q} 
\end{pmatrix}
= \begin{pmatrix} 
z_1& & & \\
&z_2 & & \\
& & \ddots & \\
& & &z_q 
\end{pmatrix}
\]
is in canonical form.

The corresponding modularity matrices for an ideal $A$ and the corresponding block diagonal $A_P$ are given by
\begin{align*}
M& = M^T= A - \frac{A \ones_n \ones_n^T A}{2m}  \\
&= \tilde{Z}_* \tilde{Z}_*^T - \frac{\tilde{Z}_* \tilde{Z}_*^T  
\ones_n \ones_n^T \tilde{Z}_* \tilde{Z}_*^T}{2m}   \\
&= \tilde{Z}_*(I_q - \frac{\tilde{s}\tilde{s}^T}{2m} )\tilde{Z}_*^T, \\
\end{align*}
and 
\begin{equation*}
M_P= {Z}_*(I_q -\frac{ss^T}{2m}){Z}_*^T,
\end{equation*}
where $s = Z_*^T\ones_n =\begin{pmatrix}n_1 &  \dotsc & n_q \end{pmatrix}^T$, and $2m = \ones_n^T A_P  \ones_n = s^Ts = \sum_{i=1}^q n_i^2$.

The cost function $f(X)$ is invariant under reorderings of $A$, so
we can analyze any row ordering of $Z_*$ denoted generically as $Z$
below.
The following result for the value of $f(Z)$, i.e., the
cost function at the assignment matrix that generates the ideal matrix $A$,
follows directly from the definitions.

\begin{lemma}
If $A = ZZ^T$ for $Z \in {\cal A}_{n,q}$ then
\[
f(Z) = \sum_{i=1}^q n_i^2 - \frac{\sum_{i=1}^q n_i^4}{\sum_{i=1}^q n_i^2}.
\]
\end{lemma}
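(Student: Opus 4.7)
The plan is to substitute $X = Z$ directly into the definition $f(X) = \tr(X^T M X)$ and then exploit the special algebraic structure carried by an assignment matrix, namely that its column inner products and column sums are both captured by the vector $s = (n_1, \ldots, n_q)^T$.

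First I would expand
\[
Z^T M Z = Z^T A Z - \frac{(Z^T A \ones_n)(\ones_n^T A Z)}{2m},
\]
and then replace $A$ by $ZZ^T$. This rewrites every occurrence of $A$ in terms of $Z$, so the expression involves only $Z^T Z$ and $Z^T \ones_n$. From the definition of ${\cal A}_{n,q}$, we have $Z^T Z = D := \diag(n_1, \dots, n_q)$, and from property (i) combined with (ii) (or directly by counting the ones in each column), $Z^T \ones_n = s$. Hence
\[
Z^T M Z = D^2 - \frac{(Ds)(Ds)^T}{2m}.
\]

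Next I would take the trace. Using $\tr(D^2) = \sum_{i=1}^q n_i^2$ and $\tr\bigl((Ds)(Ds)^T\bigr) = \|Ds\|_2^2 = \sum_{i=1}^q n_i^4$, together with $2m = \ones_n^T A \ones_n = \ones_n^T Z Z^T \ones_n = s^T s = \sum_{i=1}^q n_i^2$, one obtains
\[
f(Z) = \tr(Z^T M Z) = \sum_{i=1}^q n_i^2 - \frac{\sum_{i=1}^q n_i^4}{\sum_{i=1}^q n_i^2},
\]
which is the claimed identity.

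There is no real obstacle here; the argument is a direct computation once the two identities $Z^T Z = D$ and $Z^T \ones_n = s$ are isolated. The only small thing to be careful about is justifying $Z^T \ones_n = s$ cleanly from the definition (it follows because the $i$-th entry of $Z^T \ones_n$ counts the nonzeros in the $i$-th column of $Z$, which equals $n_i$ by property (ii) and the $0/1$ nature of $Z$) and ensuring the cyclic trace manipulation is applied correctly to the rank-one correction term.
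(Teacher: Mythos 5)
Your computation is correct and is exactly the direct verification the paper intends (it states only that the lemma ``follows directly from the definitions''): substituting $A=ZZ^T$, using $Z^TZ=\diag(n_1,\dots,n_q)$ and $Z^T\ones_n=s$, and taking traces gives the claimed value. No issues.
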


We show that the value $f(X)$ for any $X \in {\cal A}_{n,q}$ is bounded above
by $f(Z)$ in Theorem \ref{thm max}. The following lemmas are easily proven and are useful in proving the main result.

\begin{lemma}
If $A = ZZ^T$ for $Z \in {\cal A}_{n,q}$ then for any 
$X \in {\cal A}_{n,q}$ 
\[
f(X)= \tr(X^TZ(I_q - \frac{ss^T}{2m}) Z^TX) \leq \sum_{i=1}^q \gamma_i v_i^TXX^Tv_i,
\]
where $s = \begin{pmatrix} n_1 & \dotsc & n_q \end{pmatrix}^T$,
$v_i = Ze_i$,
\[
2m =\sum_{i=1}^q n_i^2, ~\gamma_i := 1 - \frac{n_i^2}{2m},
\]
where $0 \leq \gamma_i < 1$.
\end{lemma}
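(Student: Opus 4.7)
The plan is to manipulate the trace expression algebraically until the bound on the right-hand side emerges as the omission of nonnegative cross terms. Since $A=ZZ^T$ with $Z\in\mathcal A_{n,q}$, one has $Z^T\ones_n=s=(n_1,\dots,n_q)^T$ and $2m=\ones_n^T A\ones_n=s^Ts$, so the modularity matrix factors exactly as in the canonical case:
\[
M=A-\tfrac{1}{2m}A\ones_n\ones_n^T A=Z\!\left(I_q-\tfrac{ss^T}{2m}\right)\!Z^T,
\]
which already gives the first equality in the lemma.

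Next I would split the trace into two pieces. Writing $Z=[v_1\,v_2\,\dots\,v_q]$, so that $ZZ^T=\sum_{i=1}^q v_iv_i^T$ and $Zs=\sum_{i=1}^q n_i v_i$, one obtains
\[
f(X)=\tr(X^TZZ^TX)-\tfrac{1}{2m}\tr(X^TZss^TZ^TX)=\sum_{i=1}^q v_i^TXX^Tv_i-\tfrac{1}{2m}\sum_{i,j=1}^q n_in_j\,v_i^TXX^Tv_j.
\]
Separating the diagonal $i=j$ terms from the off-diagonal terms in the second sum and combining with the first sum produces
\[
f(X)=\sum_{i=1}^q\!\left(1-\tfrac{n_i^2}{2m}\right)\!v_i^TXX^Tv_i-\tfrac{1}{2m}\sum_{i\ne j} n_in_j\,v_i^TXX^Tv_j=\sum_{i=1}^q\gamma_i\,v_i^TXX^Tv_i-\tfrac{1}{2m}\sum_{i\ne j} n_in_j\,v_i^TXX^Tv_j.
\]
The desired inequality is therefore equivalent to showing that the cross-term $\sum_{i\ne j} n_in_j\,v_i^TXX^Tv_j$ is nonnegative.

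The crucial observation, and the only place where the assignment structure of $X$ and $Z$ is used, is that $v_i^TXX^Tv_j=(X^Tv_i)^T(X^Tv_j)$, where $X^Tv_i$ is a vector of nonnegative integers because both $X$ and $Z$ have entries in $\{0,1\}$ (the $k$-th component of $X^Tv_i$ simply counts the nodes placed in cluster $k$ of $X$ that belong to cluster $i$ of $Z$). Hence each $v_i^TXX^Tv_j\ge 0$, and since $n_i,n_j\ge 0$ every term in the off-diagonal sum is nonnegative. This is the only nontrivial step; the remaining checks are routine. Finally, $0\le\gamma_i<1$ follows from $n_i\ge 0$ and $n_i^2\le\sum_j n_j^2=2m$, with strict inequality guaranteed whenever at least two communities are nonempty.
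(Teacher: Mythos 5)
Your proof is correct and is essentially the argument the paper intends (the paper states this lemma without proof, describing it as ``easily proven''): expanding the quadratic form via $Zs=\sum_i n_i v_i$ and observing that the off-diagonal cross terms $n_i n_j\, v_i^T X X^T v_j$, $i\neq j$, are nonnegative because $X^T v_i$ is an entrywise nonnegative vector is exactly the one nontrivial step, and you have identified and justified it. One small quibble: the strict bound $\gamma_i<1$ requires $n_i>0$ for that particular $i$ (community $i$ itself nonempty), not that at least two communities are nonempty, though this is implicit in the paper's definition of an assignment matrix in canonical form.
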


%

\begin{lemma}\label{lem:zmax}
Given $Z \in {\cal A}_{n,q}$, any 
$X \in {\cal A}_{n,q}$ satisfies
\[
v_i^TXX^Tv_i \leq v_i^TZZ^Tv_i,\;\;\;1 \leq i \leq q
\]
where $v_i = Ze_i$.  Equality holds only when $X=ZP_q$, i.e., a column permutation
of $Z$.
\end{lemma}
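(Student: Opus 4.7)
The key idea is to exploit the combinatorial nature of the assignment matrices: both $X$ and $Z$ encode partitions of $\{1,\dots,n\}$ into at most $q$ groups, and $v_i^{T}XX^{T}v_i$ counts exactly how the $i$-th $Z$-community is distributed across the $X$-communities. Once this is made explicit, the inequality reduces to a one-line elementary bound on nonnegative numbers with a fixed sum, and the equality case pins down $X$ as a column permutation of $Z$.

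First I would evaluate the right-hand side in closed form. Since $Z \in {\cal A}_{n,q}$ satisfies $Z^{T}Z = \diag(n_1,\dots,n_q)$ and $v_i = Z e_i$, one gets $Z^{T} v_i = n_i e_i$ and hence $v_i^{T} Z Z^{T} v_i = n_i^{2}$. Next I would introduce the overlap counts $m_{ij} := e_i^{T} Z^{T} X e_j \in \mathbb{Z}_{\geq 0}$, which are the number of vertices lying simultaneously in the $i$-th $Z$-community and the $j$-th $X$-community. Then $v_i^{T} X = (m_{i1},\dots,m_{iq})$, so
\[
v_i^{T} X X^{T} v_i \;=\; \sum_{j=1}^{q} m_{ij}^{2}.
\]
Property (i) of assignment matrices applied to $X$ gives the row-sum identity $\sum_{j} m_{ij} = v_i^{T} X \ones_q = v_i^{T} \ones_n = n_i$. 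Since the $m_{ij}$ are nonnegative with fixed sum $n_i$, the elementary inequality $\sum_{j} m_{ij}^{2} \leq (\max_j m_{ij}) \sum_{j} m_{ij} \leq n_i \cdot n_i$ delivers the required bound, with equality if and only if exactly one of the $m_{ij}$ equals $n_i$ and the rest vanish.

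For the equality statement, if equality holds for every $i$ then the previous step produces a map $\pi : \{1,\dots,q\} \to \{1,\dots,q\}$ such that the entire $i$-th $Z$-community lies inside the $\pi(i)$-th $X$-community, i.e., $Z e_i = X e_{\pi(i)}$. Summing the sizes $\sum_i n_i = n$ together with the fact that $X$ also partitions all $n$ vertices into its $q$ columns forces $\pi$ to be a bijection (under the tacit assumption that every $X$-community is nonempty); the associated permutation matrix $P_q$ then gives $X = Z P_q$ column by column. The main obstacle I foresee is precisely this equality-case bookkeeping: one has to rule out (or absorb into the standing conventions on ${\cal A}_{n,q}$) the degenerate possibility that $X$ has an all-zero column, since otherwise two distinct $Z$-communities could merge into a single $X$-community while formally still achieving equality in the per-$i$ bound. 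Apart from that subtlety, the remainder of the argument is routine once the combinatorial rewriting of $v_i^{T} X X^{T} v_i$ in terms of the overlap counts $m_{ij}$ is in place.
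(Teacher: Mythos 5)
Your proof is correct, and in fact the paper gives no proof of this lemma at all (it is dismissed with ``the following lemmas are easily proven''), so your argument supplies the missing details rather than competing with an alternative. The reduction to overlap counts $m_{ij}=e_i^TZ^TXe_j$ with $v_i^TXX^Tv_i=\sum_j m_{ij}^2$, $\sum_j m_{ij}=n_i$, and $v_i^TZZ^Tv_i=n_i^2$ is exactly the natural combinatorial route, and the bound $\sum_j m_{ij}^2\le(\sum_j m_{ij})^2$ with equality iff at most one $m_{ij}$ is nonzero is the right tool. Two small remarks. First, in the equality discussion the phrase ``lies inside the $\pi(i)$-th $X$-community, i.e., $Ze_i=Xe_{\pi(i)}$'' conflates containment with equality: at that stage you only have $Ze_i\le Xe_{\pi(i)}$ entrywise, and the equality of columns follows only after you have shown $\pi$ is a bijection (since then the disjoint $X$-communities can each absorb exactly one full $Z$-community). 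Your subsequent counting argument does establish this, so the logic is sound, but the ``i.e.'' should come later. Second, the caveat you raise about empty columns is genuine and worth stating explicitly: the paper's definition of ${\cal A}_{n,q}$ permits $n_j=0$, and with an all-zero column one can merge two $Z$-communities into one $X$-community, achieve equality for every $i$, and yet have $X\neq ZP_q$ (e.g.\ $n=q=2$, $Z=I_2$, $X=\bigl(\begin{smallmatrix}1&0\\1&0\end{smallmatrix}\bigr)$). So the equality claim of the lemma tacitly assumes all communities are nonempty, and your proof correctly isolates that as the only point needing an extra hypothesis.
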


%

The desired result is stated as Theorem~\ref{thm max}.
\begin{theorem}\label{thm max}
If $A = ZZ^T$ for $Z \in {\cal A}_{n,q}$ is an ideal adjacency matrix then
for any $X \in {\cal A}_{n,q}$ 
\[
f(X) \leq f(Z),
\]
where $f(X) = \tr(X^TZ(I_q - \frac{ss^T}{2m}) Z^TX)$,
$s = \begin{pmatrix} n_1 & \dotsc & n_q \end{pmatrix}^T$, 
$2m={\sum_{i=1}^q n_i^2}$.
\end{theorem}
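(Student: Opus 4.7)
The plan is to chain the two preceding lemmas together; the only substantive work is to verify that the upper bound they produce is exactly $f(Z)$.

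First I would apply the penultimate lemma to write
\[
f(X) \leq \sum_{i=1}^q \gamma_i v_i^T X X^T v_i,
\]
where $v_i = Ze_i$ and $\gamma_i = 1 - n_i^2/(2m) \in [0,1)$. Since every $\gamma_i$ is nonnegative, I can multiply the per-$i$ inequality $v_i^T XX^T v_i \leq v_i^T ZZ^T v_i$ from Lemma~\ref{lem:zmax} by $\gamma_i$ and sum, obtaining
\[
f(X) \leq \sum_{i=1}^q \gamma_i v_i^T ZZ^T v_i.
\]

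Next I would simplify the right-hand side using the assignment-matrix structure. Because $Z \in {\cal A}_{n,q}$ has orthogonal columns of squared norms $n_i$, we have $Z^TZ = \diag(n_1,\dots,n_q)$, so $Z^T v_i = Z^TZe_i = n_i e_i$ and therefore $v_i^T ZZ^T v_i = n_i^2$. Substituting and using $2m=\sum_i n_i^2$ gives
\[
\sum_{i=1}^q \gamma_i v_i^T ZZ^T v_i = \sum_{i=1}^q \left(1-\frac{n_i^2}{2m}\right) n_i^2 = \sum_{i=1}^q n_i^2 - \frac{\sum_{i=1}^q n_i^4}{\sum_{i=1}^q n_i^2},
\]
which is exactly the value of $f(Z)$ given by the earlier lemma. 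Combining the two displayed inequalities yields $f(X) \leq f(Z)$, as desired.

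There is essentially no main obstacle here: the two lemmas do all the analytic work, and the only thing that must be checked is that the coefficients $\gamma_i$ are nonnegative (so the inequality direction is preserved when weighting term-by-term) and that the resulting sum collapses to the closed form for $f(Z)$. Both are immediate from the definitions of $\gamma_i$ and the orthogonality of the columns of $Z$.
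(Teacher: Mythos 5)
Your proposal is correct and follows essentially the same route as the paper: chain the two lemmas (using $\gamma_i \geq 0$ to weight the termwise bound from Lemma~\ref{lem:zmax}) and then collapse $\sum_i \gamma_i v_i^T ZZ^T v_i$ to $\sum_i n_i^2 - \frac{\sum_i n_i^4}{\sum_i n_i^2} = f(Z)$. Your explicit remark that the nonnegativity of the $\gamma_i$ is what licenses the termwise substitution is a small but worthwhile clarification that the paper leaves implicit.
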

\begin{proof}
The series of lemmas above yields
\[
f(X) \leq \sum_{i=1}^q \left(1 - \frac{n_i^2}{2m}\right) v_i^TZZ^Tv_i.
\]
Note that 
\begin{align*}
&\sum_{i=1}^q \left(1 - \frac{n_i^2}{2m}\right) v_i^TZZ^Tv_i \\
&=\sum_{i=1}^q v_i^TZZ^Tv_i - \sum_{i=1}^q\frac{n_i^2}{2m} v_i^TZZ^Tv_i \\
&= \sum_{i=1}^q n_i^2- \frac{\sum_{i=1}^q n_i^4}{2m}\\
&= \sum_{i=1}^q n_i^2- \frac{\sum_{i=1}^q n_i^4}{\sum_{i=1}^q n_i^2}
=f(Z).
\end{align*}
\end{proof}
Theorem \ref{thm max} shows that the ideal graph assignment is a global maximum of the modularity function over $\mathcal{A}_{n,q}$.

\section{Stiefel Manifold Algorithms for Community Detection}

The algorithms discussed here assume the cost function
\[
f(X)=\tr(X^TMX),\]
\[
where~~ M = A - \frac{A\ones_n\ones_n^TA }{\ones_n^TA\ones_n}.
\]

\subsection{The connection between the modularity matrix and the Stiefel manifold $St(q,n)$}

\begin{lemma}\label{rangeA_M_1}
 Let $Z\in {\mathcal A}_{n,q}$ and define $M=A-\frac{A\mathbf{1}\mathbf{1}^\mathsf{T}A}{\mathbf{1}^\mathsf{T}A\mathbf{1}}$.
 If $A$ is the adjacency matrix of an ideal graph, then $A=ZZ^T$ and
 \begin{equation} \label{1}
 \mathcal R(A) =  \mathcal R(Z) = (\mathcal R(M) \oplus^\perp \mathcal R(\mathbf{1}_n)),
 \end{equation}
 \begin{equation} \label{2} 
 \mathcal N(M) =  (\mathcal N(Z^T) \oplus^\perp \mathcal R(\mathbf{1}_n)) =  (\mathcal N(A) \oplus^\perp \mathcal R(\mathbf{1}_n)),
 \end{equation}
 where $ \mathcal R(A) $ denotes the range of $A$, $\mathcal N(A) $ denotes the null space of $A$, and $\oplus^\perp$ denotes the direct sum of two perpendicular spaces.
 \end{lemma}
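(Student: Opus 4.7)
The plan is to leverage two basic facts: first, that $A=ZZ^T$ with $Z$ having full column rank $q$, which immediately identifies $\mathcal{R}(A)$ with $\mathcal{R}(Z)$ and $\mathcal{N}(A)$ with $\mathcal{N}(Z^T)$; second, that $M$ is a symmetric rank-one perturbation of $A$ of the form $M = A - \frac{(A\mathbf{1}_n)(A\mathbf{1}_n)^T}{\mathbf{1}_n^T A\mathbf{1}_n}$, so that $M^T = M$ and hence $\mathcal{R}(M) = \mathcal{N}(M)^\perp$. The strategy is to pin down $\mathcal{N}(M)$ explicitly and then dualize to recover the range decomposition.

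First I would record the key observation that $\mathbf{1}_n = Z\mathbf{1}_q$ by item (i) of the definition of an assignment matrix, so $\mathbf{1}_n \in \mathcal{R}(Z) = \mathcal{R}(A)$, and consequently $A\mathbf{1}_n \neq 0$. A direct calculation shows $M\mathbf{1}_n = A\mathbf{1}_n - \frac{A\mathbf{1}_n(\mathbf{1}_n^T A\mathbf{1}_n)}{\mathbf{1}_n^T A\mathbf{1}_n} = 0$, so $\mathcal{R}(\mathbf{1}_n) \subseteq \mathcal{N}(M)$.

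Next I would characterize $\mathcal{N}(M)$ completely. If $Mx=0$ then $Ax = \alpha A\mathbf{1}_n$ with $\alpha = \mathbf{1}_n^T Ax / \mathbf{1}_n^T A\mathbf{1}_n$, so $x - \alpha\mathbf{1}_n \in \mathcal{N}(A)$, i.e. $x \in \mathcal{N}(A) + \mathcal{R}(\mathbf{1}_n)$. Conversely, for $x = y + \beta\mathbf{1}_n$ with $Ay=0$, symmetry of $A$ gives $\mathbf{1}_n^T A y = 0$, so the two terms defining $Mx$ cancel. This shows $\mathcal{N}(M) = \mathcal{N}(A) + \mathcal{R}(\mathbf{1}_n)$. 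The sum is orthogonal because $A$ is symmetric, so $\mathcal{N}(A) = \mathcal{R}(A)^\perp$, and $\mathbf{1}_n \in \mathcal{R}(A)$ by step one. Substituting $\mathcal{N}(A) = \mathcal{N}(Z^T)$ then yields both forms of equation~(\ref{2}).

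Finally I would obtain equation~(\ref{1}) by taking orthogonal complements. Since $M$ is symmetric, $\mathcal{R}(M) = \mathcal{N}(M)^\perp = (\mathcal{N}(A) \oplus^\perp \mathcal{R}(\mathbf{1}_n))^\perp = \mathcal{R}(A) \cap \mathcal{R}(\mathbf{1}_n)^\perp$. Combined with $\mathbf{1}_n \in \mathcal{R}(A)$, this yields the orthogonal decomposition $\mathcal{R}(A) = \mathcal{R}(M) \oplus^\perp \mathcal{R}(\mathbf{1}_n)$, and the identity $\mathcal{R}(A) = \mathcal{R}(Z)$ is immediate from $A = ZZ^T$ with $Z$ of full column rank. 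I do not expect a serious obstacle here; the only delicate point is making sure to invoke the symmetry of both $A$ and $M$ to justify that all sums are orthogonal rather than merely direct, and to confirm that $\mathbf{1}_n \in \mathcal{R}(A)$ (rather than, say, $\mathbf{1}_n \in \mathcal{N}(A)$) so that the rank-one correction term in $M$ genuinely shifts the kernel by one dimension.
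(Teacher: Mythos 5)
Your proof is correct, but it takes a genuinely different route from the paper's. The paper works through the factorization $M = Z\left(I_q - \frac{ss^T}{s^Ts}\right)Z^T$ with $s = Z^T\mathbf{1}_n$, uses idempotency of the projector $I_q - \frac{ss^T}{s^Ts}$ to write $M$ as a symmetric product and read off $\mathcal R(M)$ and $\mathcal N(M)$, and then closes the argument with a rank count (the projector has rank $q-1$, so the dimensions of the ranges of $A$ and $M$ can differ by at most one) combined with the two facts $M\mathbf{1}_n = 0$ and $\mathbf{1}_n = Z\mathbf{1}_q \in \mathcal R(Z)$. You instead bypass the factorization entirely and treat $M$ as a rank-one symmetric update of $A$, solving $Mx=0$ directly to obtain the exact identity $\mathcal N(M) = \mathcal N(A) + \mathcal R(\mathbf{1}_n)$, then dualizing. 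Your kernel computation is more elementary and slightly more general: the set equality $\mathcal N(M)=\mathcal N(A)+\mathcal R(\mathbf{1}_n)$ holds for any symmetric $A$ with $\mathbf{1}_n^TA\mathbf{1}_n\neq 0$, and the ideal structure $A=ZZ^T$ enters only to guarantee $\mathbf{1}_n\in\mathcal R(A)$, which makes the sum orthogonal and strictly larger than $\mathcal N(A)$. What the paper's route buys is the explicit factorization through $Z$ and the projector, which is reused elsewhere (e.g., in the eigendecomposition discussion and the earlier modularity lemmas), whereas your argument is self-contained but does not expose that structure. One cosmetic remark: in your converse step the vanishing of $\mathbf{1}_n^TAy$ follows directly from $Ay=0$ without invoking symmetry of $A$; symmetry is only needed for the orthogonality statements $\mathcal N(A)=\mathcal R(A)^\perp$ and $\mathcal N(M)=\mathcal R(M)^\perp$.
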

 \begin{proof}
  Note that the symmetry of $A$ and $M$ implies $\mathcal N(M)=\mathcal R(M)^\perp$ and $\mathcal N(A)=\mathcal R(A)^\perp$. Therefore 
  \eqref{1} and \eqref{2} are equivalent. It follows from the definition of $M$ that 
\begin{align*}
 M&= Z(I_q-\frac{ss^T}{s^Ts}) Z^T=Z(I_q-\frac{ss^T}{s^Ts})^2 Z^T\\
  &=[Z(I_q-\frac{ss^T}{s^Ts})][(I_q-\frac{ss^T}{s^Ts})Z^T].
\end{align*}  
 
  This implies that
  $$ \mathcal R(M)=\mathcal R(Z(I_q-\frac{ss^T}{s^Ts})), \quad \mathcal N(M)=\mathcal N((I_q-\frac{ss^T}{s^Ts})Z^T),
  $$
  and we also have 
  $$ \mathcal R(A)=\mathcal R(Z), \quad \mathcal N(A)=\mathcal N(Z^T).
  $$
  Since the projector $(I_q-\frac{ss^T}{s^Ts})$ has rank $n-1$, it follows that the ranges and null spaces of $A$ and $M$ have dimensions that can only differ by 1 at most.
  Now consider the vector $\mathbf{1}_n$. Since $s=Z^T\mathbf{1}_n$, we have $M\mathbf{1}_n=0$ and since $M$ is symmetric, $\mathbf{1}_n$ is orthogonal to $\mathcal R(M)$.
  Since $\mathbf{1}_n=Z\mathbf{1}_q$, we have $\mathbf{1}_n\in\mathcal R(Z)=\mathcal R(A)$. Together, these two properties prove \eqref{1}, and hence also \eqref{2}.
 
 \end{proof}
  
Since  $rank(M)=q-1$ and $M=M^T$, we have the eigendecomposition
\[
M = X_* \Gamma X_*^T,\]
where
\[X_* \in St(q-1,n),\;\;\Gamma = diag(\gamma_1, \dotsc,
\gamma_{q-1}), \;\;\gamma_i \neq 0 .
\]
It then follows by Lemma \ref{rangeA_M_1} that $\left[ X_* ~ \frac{\mathbf{1}_n}{\sqrt{n}} \right] \in St(q,n)$ since $\mathcal R(X_*)$ is a subspace of $\mathcal R(M)$.

\subsection{An Important Basis for an Ideal ${\cal R}(A)$}

For the modularity matrix, the relationship between $A$ and $M$ is one of deflation of range that allows
the characterization of the part of ${\cal R}(A)= {\cal R}(Z)$ that is removed
when considering ${\cal R}(M)$ as shown in \eqref{1}. 

Therefore, we can now get the anticipated result of  ${\cal R}(Z) = 
{\cal R}\left(\begin{bmatrix} X_* & \frac{\ones_n}{\sqrt{n}} \end{bmatrix}\right)$.
\subsection{A Constrained Stiefel Optimization Problem}
\subsubsection{Multiple Extrema:}
Note that if a space $\cal B$ of dimension $q$ has a basis that is an assignment matrix then
it has $q!$ such bases all of which are of the form $ZP_q$ where
$Z$ is any assignment matrix basis and $P \in \{0,1\}^{q \times q}$ is a permutation
matrix.  All of these matrices have exactly $n$ nonzero elements which is the minimum count possible for bases of the space. If the columns of such a matrix, $Z$, are normalized in Euclidean 2-norm length then an element of $St(q,n)$ is produced with 
$n_i$ elements in column $i$ all with the value $1/\sqrt{n_i}$ with $\sum_{i=1}^q
n_i = n$. These are the global minima of
\[
\min_{X \in St(q,n), {\cal R}(X)={\cal B}} \lVert X \rVert_1,
\]
where the $l_1$ norm is defined as $\lVert X \rVert_1=\sum_{ij}\lVert X_{ij} \rVert$ imposing the sparsity of $X$.

In practical numerical computation, even on ideal matrices and certainly on problems
for which noise perturbs $A$ and $Z$ from ideal, some projection is needed to take a matrix
in $St(q,n)$ to the ``nearest'' matrix in ${\cal A}_{n,q}$. 

\subsubsection{A Constrained Stiefel Optimization Problem:}

The constrained Stiefel optimization problem used to perform community detection is
\begin{equation}
X_*=\argmax_{X \in St(q,n),\;\ones_n \in {\cal R}(X)} \tr(X^TMX) - \lambda 
\lVert X \rVert_1,
\label{constrainedCostFunction}
\end{equation}
where $\lambda>0$ is a tuning parameter controlling the balance between variance and sparsity. The approach to compute $X_*$ is given in Algorithm \ref{continuation1}.

\begin{algorithm}
	\caption{Algorithm for the Constrained Stiefel Optimization Problem} 
	\begin{algorithmic}[1]
		\State \textit{Step 1:} Compute  $Y_* \in St(q-1,n)$ where
\[
Y_* = \argmax_{X \in St(q-1,n)} \tr(X^TMX).
\] and
set the initial guess for Step 2 as
\[
X_0 = \begin{bmatrix} Y_* & \frac{\ones_n}{\sqrt{n}} \end{bmatrix} .
\]
\State \textit{Step 2:}
Compute $X_* \in St(q,n), \ones_n \in {\cal R}(X)$  where
\begin{equation*}
X_*=\argmax_{X \in St(q,n),\;\ones_n \in {\cal R}(X)} \tr(X^TMX) - \lambda 
\lVert X \rVert_1,
\end{equation*}
with $X_0$ as the initial guess.

\State \textit{Step 3:}
Get the assignment matrix $\hat{X}_*$ by setting the element with the largest magnitude in each row of $X_*$ as $1$, and the others as $0$ when $X_*$ is sufficiently sparse.
Assess the assignment matrix $\hat{X}_*$ and determine whether it is acceptable as a solution
to the community detection problem or if the parameter $\lambda$ should be updated.
If $\lambda$ is updated then return to Step 2.
   
	\end{algorithmic} 
	\label{continuation1}
\end{algorithm}

Step 1 can be computed using any trace maximization algorithm. Our code uses
RNewton in ROPTLIB \cite{huang2018roptlib}.
A projection is needed to define a Riemannian projected proximal gradient algorithm to solve this problem in Step 2. 
In fact, this projection can be used for any line search based algorithm where
$Y_k = R(\alpha D_k)$  for a Riemannian retraction $R$ must be feasible.
The projection used in the proposed algorithm is described below.

In Step 3, we use the idea of continuation to choose the parameter $\lambda$ that defines the cost function. We can get the optimal $X^*_1$ after setting the initial $\lambda_0$ and $X_0$. We then increase $\lambda_0$ and  use $X^*_1$ as the initial matrix to get $X^*_2$. We continue this procedure until the cost function $\tr(X^TMX) - \lambda \lVert X \rVert_1$ does not improve anymore.

Step 2 is the main part of the algorithm, and it is inspired by \cite{huang2019extending}. 
In \cite{huang2019extending}, the authors generalized the FISTA \cite{beck2009fast} from the Euclidean space to the Riemannian setting and considered the general nonconvex optimization problem  
\begin{equation}
\min_{X\in \mathcal{M}} F(X) = f(X) + g(X),
\label{eq: general}
\end{equation}
where $\mathcal{M} \subset \mathbb{R}^{n\times q}$ is a Riemannian submanifold, $f:\mathbb{R}^{n\times q} \rightarrow \mathbb{R}$ is $L$-continuously differentiable (may be nonconvex) and $g$ is continuous and convex but may not be differentiable.

The optimization problem \eqref{constrainedCostFunction} is a special case of the problem \eqref{eq: general}, where $f(X)=\tr(X^TMX)$ is $L$-continuously differentiable and $g(X)=- \lambda 
\lVert X \rVert_1$ is continuous, convex, but not differentiable. However, there is an essential difference between \eqref{eq: general} and \eqref{constrainedCostFunction} in that there is a constraint $\ones_n \in {\cal R}(X)$ that defines a feasible set ${\cal F} \subset St(q,n)$. The accelerated Riemannian manifold proximal gradient method \cite{huang2019extending} is modified to define the accelerated Riemannian manifold projected proximal gradient (ARPPG) method by adding the projection (\ref{projection}) derived in the next section. The details of ARPPG are in Algorithm \ref{alg:ARPPG}.

\begin{algorithm}
	\caption{Accelerated Riemannian Manifold Projected Proximal Gradient Method(ARPPG)} 

	\hspace*{\algorithmicindent} \textbf{Input:} Lipschitz constant $L$ on $\nabla f$, parameter $\mu \in (0, 1/L]$ in the proximal mapping, line search parameter $\sigma \in (0,1)$, shrinking parameter in line search $\beta\in (0,1)$, positive integer $N$ for safeguard;\\
	\begin{algorithmic}[1]
	\State $t_0 = 1, y_0 = x_0, z_0 = x_0; \lambda=\lambda_0$
		\For {$k=0,...$} 
		    \If {mod(k, N) = 0} \qquad $\triangleright$ Invoke safeguard every $N$ iterations
			    \State Invoke Algorithm 3: $[z_{k+N}, x_k, y_k, t_k]=Algo3(z_k, x_k, y_k, t_k, F(x_k))$;
			\EndIf
			\State Compute $$\eta_{y_k} = \argmin_{\eta \in T_{y_k}\cal{M}}\langle \text{grad} f(y_k), \eta\rangle + \frac{1}{2\mu}||\eta||_F^2 + g(y_k + \eta);$$
			\State $x_{k+1} = R_{y_k}(\eta_{y_k})$;
			\State $x_{k+1} = proj(x_{k+1})$;
			\State $t_{k+1} = \frac{\sqrt{4t_k^2+1}+1}{2}$;
			\State Compute $$ y_{k+1} = R_{x_{k+1}}(\frac{1-t_k}{t_{k+1}}R_{x_{k+1}}^{-1}(x_k));$$
			\State Compute $y_{k+1} = proj(y_{k+1})$.
		\EndFor
		\State $X_* = x_{k+1}$
	\end{algorithmic} 
    \label{alg:ARPPG}
\end{algorithm}

\begin{algorithm}
	\caption{Safeguard for Algorithm ARPPG} 
	\hspace*{\algorithmicindent} \textbf{Input:} [$z_k, x_k, y_k, t_k, F(x_k)$];\\
	\hspace*{\algorithmicindent} \textbf{Output:} [$z_{k+N}, x_k, y_k, t_k$];\\
	\begin{algorithmic}[1]
	\State Compute $$\eta_{z_k}=\argmin_{\eta\in T_{z_k}\cal{M}}\langle \text{grad} f(z_k), \eta\rangle + \frac{1}{2\mu}||\eta||_F^2 + g(z_k+\eta);$$
	\State Set $\alpha=1$;
	\While {$F(proj(R_{z_k}(\alpha \eta_{z_k})))>F(z_k)-\sigma\alpha||\eta_{z_k}||_F^2$}
	    \State $\alpha = \beta\alpha$;
	\EndWhile
	\If {$F(proj(R_{z_k}(\alpha \eta_{z_k})))<F(x_k)$} \qquad $\triangleright$ Safeguard takes effect
	    \State $x_k=R_{z_k}(\alpha\eta_{z_k}), ~ y_k=R_{z_k}(\alpha\eta_{z_k})$, and $t_k=1$;
	    \State $x_k=proj(x_k), ~y_k=proj(y_k)$;
	\Else
	    \State $x_k, y_k$ and $t_k$ keep unchanged;
	\EndIf
	\State $z_{k+N}=x_k$; \qquad $\triangleright$ Update the compared iterate
	\end{algorithmic} 
	\label{safeguard}

\end{algorithm}


There are several retractions that can be constructed for the Stiefel manifold. Algorithm \ref{alg:ARPPG}, uses the efficient retraction in \cite{huang2019extending} based on the singular value decomposition (SVD):
\begin{gather*}
[Q, R]= \text{qr}(X+\eta_X),~ [U, S, V]=\text{svd}(R), \\
R_X(\eta_X)=Q(UV^T),
\end{gather*}

where qr and svd mean computing the compact QR decomposition and SVD of a matrix, respectively. $R_{X}^{-1}(Y)=YS-X$, where $S$ is the solution of the Lyapunov equation $(X^TY)S+S(Y^TX)=2I_q$.

\subsubsection{The Projection:}

Given $X \in St(q,n)$, the task is to find 
a $Y \in St(q,n)$ with $\ones_n \in {\cal R}(Y)$ that minimizes
$\lVert X - Y \rVert_F^2$. 
Letting $f(Y;\;X) = \tr (X^TY)$ denote a cost function
parameterized by $X$, the problem can be formulated in an equivalent form
by noting
\[
\min_{Y \in {\cal F}} \lVert X - Y \rVert_F^2 \leftrightarrow 
\max_{Y \in {\cal F}} f(Y;\;X) 
\]
where ${\cal F} = \lbrace Y \in St(q,n),\;\;\ones_n \in {\cal R}(Y) \rbrace$.
The maximum value of $f(Y;\;X)$ is $q$ and is achieved when $X \in {\cal F}$ and the problem is invariant with respect to $Q \in {\cal O}(q)$ where ${\cal O}(q)$ is the orthogonal group consisting of $q$-by-$q$ orthogonal matrices, i.e.,
\[
f(Y;\;X) = f(YQ;\;XQ).
\]
Note that the cost function changes for this invariance. In general,
$f(Y;\;X) \neq f(YQ;\;X)$.

For an element of the feasible set ${\cal F}$, there must exist
$Q \in {\cal O}(q)$ such that $\hat{Y} = YQ =  \begin{bmatrix} \tilde{1}_n & \tilde{Y} \end{bmatrix}$ with $\tilde{1}_n = \ones_n / \sqrt{n}$, $\tilde{Y} \in St(q-1,n)$ and
${\cal R}(\tilde{Y}) \perp \tilde{1}_n$. There are, of course, many such $\hat{Y}$ possible.
This can be seen from
\begin{gather*}
Q= \begin{bmatrix} q_1 & Q_\perp \end{bmatrix},\;\;
\hat{Y} = \begin{bmatrix} \tilde{1}_n & \tilde{Y} \end{bmatrix}
= \begin{bmatrix} Yq_1 & {Y}Q_\perp \end{bmatrix}.
\end{gather*}
Given $Y$, the vector $q_1$ is uniquely defined but $Q_\perp$ is 
any orthonormal completion of $q_1$ and $\tilde{Y} = YQ_\perp$ varies with 
the choice of $Q_\perp$.
This can be used to parameterize the cost function over ${\cal F}$ to
give an alternative form of the optimization problem defining the projection
and reveal a constructive form of the solution $Y_*$.

The two forms of the optimization problem are 
\begin{gather*}
Y_* = \argmax_{Y \in {\cal F}} \tr (X^TY),\\
where~ {\cal F} = \lbrace Y \in St(q,n),\;\;\ones_n \in {\cal R}(Y) \rbrace;\\
\\
\left(\hat{Y}_*, Q_*\right)= \argmax_{\hat{Y} \in {\cal G},\; Q \in {\cal O}(q)} \tr (Q^TX^T\hat{Y}),\\
{\cal G} = \lbrace \hat{Y}=\begin{bmatrix} \tilde{\ones}_n & \tilde{Y} \end{bmatrix} 
\;\vert \; \tilde{Y} \in St(q-1,n),\;\;\ones_n \perp {\cal R}(\tilde{Y}) \rbrace.
\end{gather*}

The second form can be solved analytically and a solution for the first form recovered
easily.
The cost function for the second form can be expanded as
\[
\tr(Q^T{X}^T\hat{Y}) = q_1^TX^T \tilde{\ones}_n + \tr (Q_\perp^TX^T\tilde{Y}).
\]
The first term of the sum in the cost function is independent of the second term while the second
term is essentially determined by the choice of $q_1$. For any $q_1$
and orthonormal completion $Q_\perp$,
the maximum value of $q_1$  for the second term 
is achieved by $\tilde{Y} = XQ_\perp$. 

Given this optimal choice of $\tilde{Y}$ parameterized by $Q$, the problem then becomes
finding the optimal $q_*$ for
\[
\max_{q_1 \in St(1,n)} \;\;q_1^T X^T\tilde{\ones}_n  
\]
and $Q_*=\begin{bmatrix} q_* & Q_\perp^* \end{bmatrix}$ where $Q_\perp^*$ is any
orthonormal completion of $q_*$.
This has a maximum value of $1$ if and only if $\tilde{\ones}_n \in {\cal R}(X)$.
Otherwise it is  maximized by
\[
q_* = \frac{X^T\tilde{\ones}_n}{\lVert X^T\tilde{\ones}_n \rVert_2}.
\]
There are several maximizers given by
\begin{gather*}
q_* = \frac{X^T\tilde{\ones}_n}{\lVert X^T\tilde{\ones}_n \rVert_2} \\
Q_\perp^* \in St(q-1,n) \;\;\text{is any orthonormal completion of $q_*$} \\
\tilde{Y}_*   = XQ_\perp^*,~ \hat{Y}_* =\begin{bmatrix} \tilde{\ones}_n & \tilde{Y}_* \end{bmatrix}.\\
\end{gather*}

Finally, $Y_*$, the maximizer for the original parameterized form
of $f(Y;\;X)$ can be determined from $\hat{Y}_*$
\begin{gather*}
Y_* = \hat{Y}_* Q_*^T = \begin{bmatrix} \tilde{\ones}_n & \tilde{Y}_* \end{bmatrix}
\begin{bmatrix} q_* & Q_\perp^* \end{bmatrix}^T
= \tilde{\ones}_n q_*^T + X Q_\perp^* (Q_\perp^*)^T.
\end{gather*}
This form shows that the choice of $Q_\perp^*$, i.e.,
the basis for ${\cal R}^\perp(q_*)$, that determines $\hat{Y}_*$
does not result in multiple $Y_*$
since the projector $Q_\perp^* (Q_\perp^*)^T$ is invariant.

Therefore, a computationally efficient form of the unique solution is
given by
\begin{align}
Y_* &= \argmax_{Y \in {\cal F}} f(Y;\;X) = \tilde{\ones}_n q_*^T + X Q_\perp^* (Q_\perp^*)^T \\
&= \tilde{\ones}_n q_*^T + X (I-q_*q_*^T),
\label{projection}
\end{align}
\begin{equation}
 q_* = \frac{X^T\tilde{\ones}_n}{\lVert X^T\tilde{\ones}_n \rVert_2}.
\end{equation}

\section{Numerical Experiments}
\subsection{Empirical Evaluation Techniques}
ARPPG was evaluated using a family of synthetic benchmark networks and real-world networks 
by comparing its performance to that of three state-of-the-art algorithms for community detection: the GN algorithm \cite{newman2004fast}, the Infomap algorithm \cite{rosvall2008maps} and the Louvain method \cite{blondel2008fast}.
The GN and Louvain methods were applied to maximizing the modularity $Q=\frac{1}{2m}\tr(X^TMX)$, where $X$ is an assignment matrix that specifies a partitioning of the nodes into communities. Even though the Infomap method was not designed to maximize the modularity, it is one of the best performing methods, see \cite{lancichinetti2011finding}. So, we also compared our algorithm with it. ARPPG maximized the cost function defined earlier based on modularity and a sparsity penalty term.


The assignments of nodes to communities produced by each algorithm for a given problem
were compared using their modularity values. However, since the modularity used
 here is one of many cost functions in the literature that heuristically define preferred assignments,
a metric independent of the cost function was used to assess the quality of the assignments.
A ground truth assignment of nodes to communities is associated with each benchmark graph. Given the ground truth, normalized mutual information (NMI) \cite{danon2005comparing} was used to compare the
quality of the communities.
NMI is a similarity measure between two partitions $X$ and $Y$ that represents their normalized mutual entropy and is defined
$$NMI(X, Y)=\frac{2 \mathcal{I}(X, Y)}{\mathcal{H}(X)+\mathcal{H}(Y)},$$
where $\mathcal{H}(X)$  is the entropy of the partition $X$ and $\mathcal{I}(X, Y)$ is the mutual information of the partitions $X$ and $Y$ given by 
\begin{gather*}
\mathcal{H}(X)=-\sum_u{\frac{n_u}{N}{\log\frac{n_u}{N}}},\\
\mathcal{I}(X, Y)=\sum_{u,v}{\frac{n_{uv}}{N}\log \left( N \frac{n_{uv}}{n_un_v}\right)},
\end{gather*}
with $n_u$ the number of nodes in community $u$ and $n_{uv}$ the number of common nodes in community $u$ of partition $X$ and community $v$ of partition $Y$. The value of NMI is in $[0,1]$ with larger values indicating higher similarity.

To correct the measures for randomness, it is necessary to specify a model according to which random partitions are generated. So, we used the adjusted mutual information (AMI) \cite{vinh2010information} as another measurement to assess the quality of the assignments of nodes to communities.  The AMI is defined to be 
$$AMI(X,Y)={\frac {\mathcal{I}(X,Y)-E\{\mathcal{I}(X,Y)\}}{\max {\{\mathcal{H}(X),\mathcal{H}(Y)\}}-E\{\mathcal{I}(X,Y)\}}},$$
where
\begin{align*}
E\{\mathcal{I}(X, &Y)\}=\sum_{u, v}\sum_{n_{uv}=(a_{u}+b_{v}-N)^{+}}^{\min(a_{u},b_{v})}{\frac {n_{uv}}{N}}\log \left({\frac {N\cdot n_{uv}}{a_{u}b_{v}}}\right)\times \\
&\frac {a_{u}!b_{v}!(N-a_{u})!(N-b_{v})!}{N!n_{{uv}}!(a_{u}-n_{{uv}})!(b_{v}-n_{{uv}})!(N-a_{u}-b_{v}+n_{{uv})!}}\\
\end{align*}
by adopting a hypergeometric model of randomness, 
where  $(a_{u}+b_{v}-N)^{+}$ denotes $\max(1,a_{u}+b_{v}-N)$, and $a_{u}=\sum _{v}n_{{uv}}$ and $ b_{v}=\sum _{u}n_{{uv}}$.

The synthetic benchmarks have clearly defined ground truth based on intracommunity connectivity
graphs  that are strongly connected but not necessarily completely connected as in our ideal
case defined above. The members of the family of networks are defined by a parameter that
makes the network have an increasingly ill-defined community structure. As a result, any reasonable
algorithm should detect community structure accurately when it is well-defined and the discrimination
ability of the algorithm is tested as the definition degrades. Additionally, we must consider the robustness of the combinatorial algorithms relative to their runtime choices, e.g., the particular random walks used in Infomap or the order and manner in which one-node moves are considered in the Louvain method. Similarly, ARPPG and other algorithms based on optimization over a continuous domain are dependent on their initial conditions or other strategies to avoid finding an unacceptable local maximum.

For a network representing real-world relationships there
can be ground truth based on a clear definition of the entities that define the
nodes, empirical observations such as observed social behavior, or 
classifications based on opinions of human observers who may or may not be experts in a discipline related to the information.  As a result, different cost functions may characterize the desired ground truth with different levels of accuracy. The use of the geometry, the sparsity constraint and continuation is an attempt to improve the robustness and aid in the selection of parameters such as the number of communities and the penalty parameter.

\subsection{Synthetic Benchmarks}



The generalized LFR benchmark graphs \cite{lancichinetti2009community} were used as the synthetic network benchmarks. These subsume the well-known benchmark proposed by Girvan and Newman \cite{girvan2002community} and are more challenging for community detection algorithms.
In the construction of the benchmark graphs, each node has a probability $p_{in}$ of being connected to nodes in its group and a probability $p_{out}$ of being connected to nodes in different groups. If $p_{in}>p_{out}$, the groups are communities, otherwise, the network is essentially a random graph without community structure. A power law distribution is used.

The condition $p_{in}>p_{out}$ can be translated into a condition on the mixing parameter $\mu$, which expresses the ratio between the external degree of a node with respect to its community and the total degree of the node \cite{lancichinetti2009community}:
$$\mu=\frac{k_i^{out}}{k_i^{in}+k_i^{out}}<\frac{N-n_c}{N},$$
where $k_i^{in}$ is the number of neighbors of node $i$ that belong to its community $c$ and $k_i^{out}$ the number of neighbors of $i$ that belong to the other communities, $N$ is the number of nodes, $n_c$ is the number of nodes of the community $c$.

Setting $\mu=0$, gives a graph defining a ground truth where the communities are strongly connected components and there are no edges between the communities. This is more challenging than the ideal ground truth of communities that are cliques used to motivate the optimization problem. For any value of $\mu >0$, the graph also has an associated ground truth but the mixing causes the community structure to be less clearly defined.
For the LFR benchmarks, the networks have $N=1000$ nodes, the average node degree is  $20$, the maximum node degree is $50$, the communities have between $20$ and $100$ nodes, the exponent of the degree power law distribution is $-2$, and the exponent of the community size power law distribution is $-1$.
The numbers of communities for the LFR benchmarks are around $20$.

\subsubsection{Results for the LFR networks:}
For the LFR benchmark with $\mu=0$, as expected and required, all four algorithms
have $NMI=1$, the same modularity value and the same assignment to $q_{true} = 24$ strongly 
connected communities.  ARPPG requires the desired number of communities as a parameter 
value and in this case it was taken as $q = q_{true} = 24$. The choice of an initial $q$ and
the development of a dynamic adaptation strategy are key ongoing tasks for ARPPG.
There is promising evidence that it is possible. For $\mu=0$ and ARPPG run with 
$q=25$ and $q=26$, i.e., near $q_{true}$, the modularity decreases as $q$ increases.
The final values of NMI for $q=25$ and $q=26$ change only slightly $0.99$ and $0.98$ respectively.
Of course this information is not available for the algorithm to use, but it is due to the fact that
the partitioning for $q=25$ and $q=26$  are nested in the partitioning for $q = q_{true} = 24$,
i.e., the extra communities are refinements of the $24$ by splitting without crossing
the ideal community boundaries.
Any nodes that are not in the same community in the ideal partitioning remain in different
communities in the refined partitions. This information can be detected by the algorithm
and used to guide adjustment of $q$ while revealing a hierarchical structure relevant to 
discussion of resolution limits \cite{fortunato2007resolution} and alternative cost functions, e.g., the constant
Potts model \cite{traag2011narrow}.

The algorithms were also tested with multiple nonzero values of $\mu$.
The values of 
NMI and modularity are shown in
Table \ref{tb:performance on LFR}
where ARPPG uses $q=q_{true}$ determined by
the network for each value of $\mu$.
All four methods determine the ground truth community assignments for the networks
with $\mu \leq 0.3$. For $\mu=0.4$ and $\mu=0.5$ three methods determine the associated ground truths
and one comes very close: GN with $NMI=0.99$, $AMI=0.99$ and ARPPG with $NMI=0.99$, $AMI=0.99$ respectively.

ARPPG using $q \neq q_{true}$ for $\mu \leq 0.4$ demonstrates trends like those for
$\mu=0$ upon which a $q$ adaptation strategy might be built.
As $q$ increases from $q_{true}$, NMI, AMI and modularity decrease at a rate that
increases as $\mu$ increases. The partitions are nested, then only slightly not nested with
one or two nodes crossing communities of the ground truth assignment, and finally
with a significant loss of nesting.

For the noisy cases in Table \ref{tb:performance on LFR}, GN degrades quickly while ARPPG and the Louvain method degrade more slowly.
Infomap achieves an $NMI=1$, $AMI=1$ until $\mu=0.5$ then drops to near $0$.
The performance of Infomap and the Louvain method are sensitive to their runtime decisions,
e.g., the Infomap performance here uses the heuristic available in the publicly distributed
code of running the method multiple times and choosing the ``best'' result. ARPPG, on the
other hand, with its continuation strategy and initial condition selection using RNewton
was seen to be remarkably robust even in the noisy situations.

 \begin{table}[htbp]
\begin{center}
\caption{Performance on LFR Bechmark Networks}\label{tb:performance on LFR}
\resizebox{\textwidth}{!}
{
\begin{tabular}{l |c |c| c |c |c |c| c| c|c}
\hline
\hline
\multirow{2}{*}{Methods} & \multirow{2}{*}{Measurements} & \multicolumn{8}{c}{The mixing parameter $\mu$}\\
\cline{3-10}
&  & 0.1 & 0.2 & 0.3 & 0.4 & 0.5 & 0.6 & 0.7 &  0.8\\
    \hline
\multirow{3}{*}{GN } & NMI & 1 & 1 & 1 & 0.9972 & 0.8694 & 0.6679 & 0.4932 & 0.4886\\
 & AMI & 1 & 1  & 1 & 0.9962 & 0.7202 & 0.2539 & 0.0142 & 0.0031\\
 & Modularity & 0.8254 & 0.7268 &  0.6283 & 0.5280 & 0.3579 &0.1230& 0.0393 &0.0329\\ \hline
 
\multirow{3}{*}{Infomap} & NMI & 1 & 1 & 1 & 1 & 1 & 0 & 0 & 0\\ 
 & AMI & 1 & 1  & 1 & 1 & 1 & 0 & 0 & 0 \\
  & Modularity & 0.8254 & 0.7268 & 0.6283 & 0.5288 & 0.4440 & 0 & 0 & 0\\ \hline
 
\multirow{3}{*}{Louvain}  & NMI & 1 & 1 & 1 & 1 & 1 & 0.9527 & 0.2192 & 0.0677\\
 & AMI & 1 & 1  & 1 & 1 & 1 & 0.9107 & 0.1748 & 0.0267 \\
 & Modularity & 0.8254 & 0.7268 & 0.6283 & 0.5288 & 0.4440 & 0.3390 & 0.2093 & 0.1921\\ \hline
 
 \multirow{3}{*}{ARPPG}  & NMI & 1 & 1 & 1 & 1 & 0.9935 & 0.8811 & 0.3422 & 0.0967\\
 & AMI & 1 & 1  & 1 & 1 & 0.9927 & 0.8651 & 0.3014 & 0.0473 \\
 & Modularity & 0.8254 & 0.7268 & 0.6283 & 0.5288 & 0.4427 & 0.3239 & 0.1712 & 0.1355\\ \hline
 \hline
\end{tabular}}
\end{center}
\end{table}

\subsection{Real World Networks}
Three widely used real-world networks were used to assess the performance of ARPPG.
The first is an American college football network \cite{girvan2002community}, in which the nodes represent football teams, and an edge exists between the nodes if there is a match between two teams. The ground truth community assignment is given by the membership in the same athletic conference, i.e.,
indisputable observations.
The second is Zachary's karate club network \cite{zachary1977information}, which is an undirected social network of friendship between 34 members of a karate club at a university. Edges connect individuals who were observed to interact outside the activities of the karate club. The ground truth is based on the splitting of the membership into $2$ new disjoint karate clubs. However, there is a second ground truth based of $4$ communities of $2$ disjoint social groups within each of the $2$ new clubs.  The $2$ community ground truth is defined by indisputable observation, the $4$ community ground truth is based
on less precise social interaction data.
The third is the Polbooks network \cite{newman2006modularity} of books about US politics published around the time of the 2004 presidential election and sold by the online bookseller Amazon.com. Edges between books represent frequent co-purchasing of books by the same buyers. The ground truth is determined by the subjective classification of the books by a non-expert human observer.
Given the difference in the level of certainty becoming increasingly debatable in these benchmarks, it is expected that detecting communities should be more difficult and dependent on cost function selection and algorithm tuning for each of the three in turn.

 \begin{table}[htbp]
\begin{center}
\caption{Performance on Real-World Networks (the best performance is in bold), where $n$ is the number of nodes, $m$ is the number of edges, $q_{true}$ is the number of ground truth communities and numbers in parentheses are the numbers of communities detected. For ARPPG the numbers in parentheses are also the values used for ARPPG's parameter $q$.}\label{tb:performance on real}
\resizebox{\textwidth}{!}
{
\begin{tabular}{l |c |c| c |c |c |c| c| c| c|c|c }
\hline
\hline
Datasets & n & m & $q_{true}$ & Measurements & GN & Infomap & Louvain &  \multicolumn{4}{c}{ARPPG}\\
    \hline
\multirow{3}{*}{Football}  & \multirow{3}{*}{115} & \multirow{3}{*}{613} & \multirow{3}{*}{12} & NMI & 0.879(10) & \textbf{0.924}(12)  & 0.890(10) & \textbf{0.924}(12) & 0.911(13) & 0.912(14) & 0.882(10)\\
 & & & & AMI & 0.802(10) & \textbf{0.898}(12) & 0.821(10) & \textbf{0.898}(12) & 0.861(13) & 0.848(14) & 0.813(10)\\ 
 & & & & Modularity & 0.600(10) & 0.601(12) & \textbf{0.605}(10) & 0.601(12) & 0.581(13) & 0.566(14) & 0.596(10)\\ \hline

\multirow{3}{*}{Karate } &\multirow{3}{*}{ 34 }& \multirow{3}{*}{78} & \multirow{3}{*}{2} & NMI & 0.580(5) & 0.700(3)  & 0.587(4) & \textbf{1.000}(2) & 0.811(3) & 0.687(4) & 0.542(5)\\
& & & & AMI & 0.402(5) & 0.579(3) & 0.425(4) & \textbf{1.000}(2)  & 0.672(3) & 0.505(4) & 0.364(5)\\
 & & & & Modularity & 0.401(5) & 0.402(3) & 0.419(4) & 0.372(2)  & 0.373(3) & \textbf{0.420}(4) & 0.382(5)\\ \hline
 
\multirow{3}{*}{Polbooks} & \multirow{3}{*}{105} & \multirow{3}{*}{441} & \multirow{3}{*}{3} & NMI & 0.559(5) & 0.494(6)  & 0.537(5) & \textbf{0.565}(3) & 0.503(4) & 0.465(5) & 0.439(6)\\ 
& & & & AMI & 0.488(5) & 0.390(6) & 0.458(5) & \textbf{0.535}(3) & 0.424(4) & 0.362(5) & 0.323(6)\\
 & & & & Modularity & 0.517(5) & 0.523(6) & \textbf{0.527}(5) & {0.508}(3) & 0.504(4) & 0.510(5) & 0.505(6)\\ \hline
 \hline
\end{tabular}}
\end{center}
\end{table}

Table \ref{tb:performance on real} summarizes the performance on the real-world networks.
Note that overall modularity values for the community assignments produced are significantly
lower than those for the synthetic networks and the different assignments produced all have similar modularity values with significantly different quality as measured by NMI and AMI.
This is most pronounced for the opinion-based ground truth of the Polbooks network as expected.
For the football network, ARPPG using $q=q_{true}$ produces an assignment close to the ground truth. Infomap produces the same $12$ community assignment but requires multiple runs, some of which produce significantly different assignments. GN and the Louvain method do not get the correct number of communities despite achieving a value of modularity close to that from the other algorithms. ARPPG run with $q \neq q_{true}$ exhibits the same trends on modularity and nesting discussed for the synthetic networks as desired.

For the karate club network, only ARPPG with $q=q_{true}=2$ produces the ground truth with $2$ communities.  When ARPPG is run with $q\neq q_{true}$ it exhibits the desired nesting trends and, in particular,
for $q=4$ it produces the second ground truth known for the network. (The NMI and AMI in the table is not $1$ for that case because it is the $4$ community ground truth compared to the $2$ community ground truth.)
The Louvain algorithm produces different $4$ community assignments depending on the order of traversal of the nodes. The $4$ community ground truth is one of them but the one in the table are not quite the same as is seen from the NMI and AMI differing from that of ARPPG.
Infomap produces different community assignments with varying numbers of communities in different runs. The result in the table is the best one.
As expected, the Polbooks network is the most difficult. Modularity does not predict well the quality of the assignment measured by NMI and AMI. Even ARPPG with $q=q_{true}$ does not produce an assignment as close to ground truth as it does for the other two networks. The fact that modularity does not clearly indicate the ground truth is also seen in the trends for ARPPG with $q \neq q_{true}$. Nesting is not observed and the best modularity is observed for $q = 5 \neq q_{true} = 3$.

\section{Conclusion}
In this paper, we propose a new Riemannian projected proximal gradient method applied to modularity with a convex nonsmooth sparsity penalty term for community detection. Numerical results show that ARPPG is competitive with state-of-the-art algorithms in terms of quality of assignment and robustness. Observations of performance as algorithm parameters vary provide leading evidence that a parameter adaptation strategy and an efficient implementation are feasible. 

\section*{Acknowledgment}
This paper was partially supported by the U.S. National Science Foundation under grant DBI 1934157. The author Wen Huang was partially supported by the Fundamental Research Funds for the Central Universities (NO. 20720190060). Part of this work was performed while the 
 author Kyle A. Gallivan was a visiting professor at UC Louvain, funded by the Science and Technology Sector, with additional support by the Netherlands Organization for Scientific Research.

\bibliographystyle{plain}                                                     


\end{document}